\newtheorem{lemma}{Lemma}
\newtheorem{theorem}{Theorem}
\newtheorem{definition}{Definition}
\newtheorem{example}{Example}
\newtheorem{corollary}{Corollary}
\newcommand{\KWD}[1]{\ensuremath{\textit{#1}}\xspace}
\newcommand{\Not}{\mathclose{\neg}}
\newcommand{\prop}{\mathbf{\sf Prop}}
\newcommand{\cF}{{\bf\scriptstyle F}}
\newcommand{\cT}{{\bf\scriptstyle T}}
\newcommand{\nx}{\mathlarger{\fullmoon}}
\newcommand{\ev}{\mathlarger{\diamondsuit}}
\newcommand{\alw}{\square}
\newcommand{\until}{\,\mathcal{U}\,}
\newcommand{\releases}{\mathcal{R}}
\newcommand{\res}{\!\upharpoonright\!}
\def\BState{\State\hskip-\ALG@thistlm}
\newcommand{\E}{\mathcal{I}}
\newcommand{\Y}{\mathcal{O}}
\newcommand{\ltl}{\KWD{\sf LTL}}
\newcommand{\Iff}{\mathrel{\leftrightarrow}}    
\journal{Information and Computation}
\begin{document}

\begin{frontmatter}

%% Title, authors and addresses

%% use the tnoteref command within \title for footnotes;
%% use the tnotetext command for theassociated footnote;
%% use the fnref command within \author or \affiliation for footnotes;
%% use the fntext command for theassociated footnote;
%% use the corref command within \author for corresponding author footnotes;
%% use the cortext command for theassociated footnote;
%% use the ead command for the email address,
%% and the form \ead[url] for the home page:
%% \title{Title\tnoteref{label1}}
%% \tnotetext[label1]{}
%% \author{Name\corref{cor1}\fnref{label2}}
%% \ead{email address}
%% \ead[url]{home page}
%% \fntext[label2]{}
%% \cortext[cor1]{}
%% \affiliation{organization={},
%%            addressline={}, 
%%            city={},
%%            postcode={}, 
%%            state={},
%%            country={}}
%% \fntext[label3]{}

\title{A Correct Algorithm for Identifying Independent Variable Sets in Reactive Systems} %% Article title
\tnotetext[t1]{This work is partially funded by the European Regional Development Fund (ERDF) through grant PID2020-112581GB-C22.}

\author[1]{Josu Oca}
\ead{josu.oca@udg.edu}

\affiliation[1]{organization={University of Girona},
city={Girona},
country={Spain}}

\author[2]{Montserrat Hermo}
\ead{montserrat.hermo@ehu.eus}

\affiliation[2]{organization={University of the Basque Country},
city={San Sebasti\'an},
country={Spain}}

\author[3]{Alexander Bolotov}
\ead{A.Bolotov@westminster.ac.uk}

\affiliation[3]{organization={University of Westminster},
city={London},
country={UK}}

%% Abstract
\begin{abstract}
Recent work has proposed algorithms for decomposing reactive synthesis specifications into simpler and independent subspecifications. Motivated by the {\it DecomposeContract} algorithm introduced by Antonio Iannopollo, we revisit this approach and provide a mathematical account of the notion of independence on which it is based. The central idea in this setting is to identify independence among system-controlled variables in linear temporal logic formulae by exploiting the power of a model checker.

Although the original {\it DecomposeContract} algorithm is sound, it is not complete. We support this observation by presenting a concrete counterexample, and we then propose a refined decomposition procedure that preserves the model-checking-based nature of the original method while ensuring completeness. Beyond this algorithmic result, our main contribution is a rigorous semantic analysis of the method, which reveals the principles behind it and makes its limitations explicit.
\end{abstract}

%%Graphical abstract
%\begin{graphicalabstract}
%\includegraphics{grabs}
%\end{graphicalabstract}

%%Research highlights
%\begin{highlights}
%\item Research highlight 1
%\item Research highlight 2
%\end{highlights}

%% Keywords
\begin{keyword}
linear temporal logic  \sep reactive synthesis \sep independent variables \sep soundness \sep completeness

\end{keyword}

\end{frontmatter}

%% Add \usepackage{lineno} before \begin{document} and uncomment 
%% following line to enable line numbers
%% \linenumbers

%% main text
%%

\section{Introduction}
\label{sec:intro}

Linear Temporal Logic (\ltl), introduced in~\citep{P77}, is a standard formalism for specifying reactive systems. In the synthesis setting, atomic propositions are typically partitioned into variables controlled by the environment and variables controlled by the system. Given an \ltl specification, the realisability problem asks whether there exists an implementation that satisfies the specification against all possible behaviours of the environment, while synthesis constructs such an implementation whenever it exists. A major obstacle is that both problems are 2EXPTIME-complete~\citep{SC85,PR89}, which has motivated a wide range of techniques aimed at improving scalability.

A natural way to cope with this complexity is to decompose a specification into smaller subspecifications, solve them separately, and then combine the resulting partial solutions. Different forms of specification decomposition have been studied in the literature; we refer the reader to the {\it Related Work} below for an overview.

In this paper, we focus on the method introduced in~\citep{I18} and further developed in~\citep{I24}, in the setting of \ltl-based Assume/Guarantee contracts. This approach is particularly interesting because it proposes an efficient decomposition algorithm in which a model checker is used to identify sets of system-controlled variables that behave independently within the original \ltl formula. This makes the method especially attractive in practice, since modern model checkers such as NuSMV\footnote{\url{https://nusmv.fbk.eu/}} are highly efficient. The procedure runs in polynomial time, excluding the cost of the queries to the model checker.

We revisit this approach and introduce several changes that preserve its efficiency while ensuring completeness. These changes were announced in the conference poster~\citep{J24}; here we develop them fully and prove that the sets of independent variables returned by our method are minimal, in the sense that no non-empty proper subset of such a set is itself independent. Our contribution is primarily theoretical: we provide precise semantics for several notions that appear in~\citep{I18,I24}, and we prove in detail the properties required to place the decomposition method on a rigorous mathematical foundation.

\subsection*{Related Work}
\label{sec:related}

The idea of splitting a specification into smaller subspecifications and solving them separately already appears in compositional verification and synthesis~\citep{CLM89,RLP98}, and has since evolved in several directions. Recent work includes both theoretical approaches~\citep{K06,EK14} and tool-supported methods designed to handle specifications consisting of many conjuncts~\citep{M18,B20,G21}. One notable example is Lisa~\citep{B20}, which shows strong scalability on large conjunctions of \ltl formulae over finite traces~\citep{G13}. Its successor Lydia~\citep{G21} extends this capability by also supporting important classes of disjunctive specifications.

Many compositional techniques have been developed for prominent fragments of \ltl. These include methods for safety \ltl~\citep{KV01} and for the fragment of GR(1)~\citep{PPS06,BJPPS12,DM17,G25}. For full \ltl synthesis, particularly relevant compositional algorithms are given in~\citep{FJR10}.

Another line of work, presented in~\citep{KF18}, adopts a game-theoretic framework. In that approach, a specification is divided into two subspecifications, and winning regions for the corresponding subgames are computed in parallel. This idea was later extended in~\citep{IOSHY21}, where the decomposition is no longer restricted to two parts and can involve an arbitrary number of subspecifications. A different decomposition strategy, based on composing the solution sequentially rather than in parallel, is presented in~\citep{C23}.

The closest related contribution to the approach considered here is the work due to~\citep{FGP21}, which is formulated with enough precision to establish both soundness and completeness of the decomposition method. The authors introduce the notion of independent sublanguages, which provides an abstract mathematical characterisation of independence between subspecifications. They then instantiate this notion in the setting of nondeterministic B\"uchi automata and formulate the decomposition procedure at the automata level. While this automata-theoretic treatment is exact, it becomes impractical because it relies on expensive operations such as automata complementation~\citep{S88}. For this reason, the authors also propose an approximate algorithm that performs well in practice, although it does not necessarily compute an {\it optimal} decomposition.

\section{Preliminaries}
\label{sec:preliminaries}
\subsection*{ {\bf Linear Temporal Logic}}

\vspace{2mm}

The Linear Temporal Logic (\ltl)~\citep{P77} extends
propositional logic by temporal operators $\nx$ (next)
and $\until$ (until). Let  $\prop$ be a set of propositional variables.
%%a fixed set of propositions, 
%
 \ltl formulae are interpreted over traces. 
%%on their set of (occurring) variables from $\prop$.
A {\it trace}~$\sigma$ is an infinite sequence of states $\sigma_0,\sigma_1,\dots$ where each state $\sigma_i\in2^\prop$. Intuitively, $\sigma_i$ represents the propositional variables that are true at the time point~$i$. 
Given a trace $\sigma = \sigma_0,\sigma_1,\sigma_2,\dots$, by $\sigma^i$ we denote the trace $\sigma_i,\sigma_{i+1},\dots$, where $i\geq 0$.
Given a trace $\sigma$ and an \ltl formula $\varphi$, below we define when $\sigma$ is a model of $\varphi$, for $\varphi$ constructed from the minimal syntax, where $\cT$ and $\cF$ are the standard abbreviations of constants true and false, the connectives $\vee,\to$ and $\Iff$ are defined classically, and the following abbreviations are used for the temporal operators: 
\begin{description}
\item 
$\quad\qquad$ eventually: $\cT\until\varphi$ for $\ev \varphi. \qquad$  always:  $\Not(\cT\until\Not\varphi)$ for $\alw\varphi$.
\item 
$\quad\qquad$ releases: $\neg(\neg\varphi\until\neg\psi)$  for $\varphi\releases\psi$.
\end{description}
We now inductively define when a trace $\sigma$ is a model of an \ltl formula $\varphi$, denoted by $\sigma \models \varphi$.
\[
\begin{array}{lcl}
\sigma \models p  &\mbox{ iff } & p \in \sigma_0\\
\sigma\models\neg\varphi &\text{iff} & \sigma\not\models\varphi\\
\sigma\models \varphi\wedge\psi &\text{iff}& \sigma\models\varphi \mbox{ and } \sigma\models\psi \\
\sigma \models\nx \varphi    &\mbox{ iff }& 
\sigma^{1} \models \varphi\\
 \sigma\models \varphi\until\psi  &\mbox{ iff } &
  \sigma^j \models \psi  \textrm{ for some } j  \textrm{ such that }  0 \leq j \textrm{ and } \\ 
  & & \sigma^i \models \varphi  \textrm{ for all } i  \textrm{ such that }  0 \leq i < j.
\end{array}
\]

If $\sigma\models \varphi$ then  we say that $\sigma$ models (or is a model of) $\varphi$. Referring by~$\Sigma_{\varphi}$ to the set of all traces that are models of $\varphi$, we say that the formula $\varphi$ is satisfiable if and only if  $\Sigma_{\varphi}$ is non-empty, otherwise $\varphi$ is unsatisfiable.  A formula $\varphi$ is valid if for every trace $\sigma$, $\sigma \in\Sigma_{\varphi}$.  
Two \ltl formulae $\varphi$ and $\psi$ are logically equivalent when $\Sigma_{\varphi} = \Sigma_{\psi}$.

The satisfiability problem for \ltl is PSPACE-complete and it was shown by \citep{SC85}.

It is well known that any satisfiable formula $\varphi$ has a model that is ultimately periodic. 
In our notation it means that $\varphi$ has a model $\sigma$ such that for some $j\geq i\geq 0$, the finite sequence $\sigma_i,\dots,\sigma_j$ is infinitely repeated, i.e.,  
$$\sigma = \sigma_0, \dots ,\sigma_{i-1}, (\sigma_i, \dots, \sigma_j)^\omega.$$
where $(\sigma_i,\dots,\sigma_j)^\omega$ denotes the infinite repetition of $\sigma_i,\dots,\sigma_j$.
\begin{example}
\label{ex:initial-running}
Consider the \ltl formula 
$$\varphi = q \wedge  \alw(\neg q \vee \nx q) \wedge ((p\wedge a)\until\neg q) \wedge \alw\ev\neg a.$$ 
The subformula $q \wedge \alw(\neg q \vee \nx q) \wedge (p\wedge a)\until\neg q $ is unsatisfiable, because~$q$ is required to hold initially, and the formula $\alw(\neg q \vee \nx q)$ forces it to persist forever. Indeed, whenever $q$ holds at some time point, it must also hold at the next state. Hence $\neg q$ can never become true, which contradicts the requirement $(p \wedge a)\until \neg q$. By contrast, the subformula $ q \wedge ((p\wedge a)\until\neg q) \wedge \alw\ev\neg a$  does not contain the persistence condition on~$q$. Therefore, $q$ may hold initially and become false at the next state. This makes the trace $\sigma = \sigma_0,(\sigma_1)^\omega$, with $\sigma_0=\{q,p,a\}$ and $\sigma_1=\emptyset$, a model of the last subformula.
\end{example}

\subsection*{ {\bf Realisability and synthesis}} 

\vspace{2mm}

Reactive systems~\citep{PR89} can be specified in the language of \ltl. 
 For the purpose of specification, the set of propositional variables, $\prop$, is divided into two disjoint subsets: $\E$, controlled by the environment, and $\Y$, controlled by the system.   
The behaviour of a reactive system can be modelled as a two-player game between the environment and the system, the players repeatedly choose Boolean values for the variables in~$\E$ and~$\Y$, respectively, over an infinite sequence of rounds. If we assume that the environment plays first, a winning strategy for the system can be defined as follows.

\begin{definition}[Winning strategy]
\label{def:winstr}
Let $\varphi$ be an \ltl formula over the variables in $\E \cup \Y$. A 
winning strategy (of the system) for $\varphi$ is a function $\theta : (2^\E)^+~\to~2^\Y$,
such that for every infinite string, usually called an environment play, 
$E = E_0 \cdot E_1 \cdots \in  (2^\E)^\omega$, the induced infinite trace 
$$\sigma^{\theta, E} = E_0 \cup \theta(E_0), E_1 \cup \theta(E_0 \cdot E_1),  E_2 \cup \theta(E_0 \cdot E_1 \cdot E_2), \cdots$$
is a model of $\varphi$.
\end{definition} 
 \noindent Thus, $\sigma^{\theta, E}$ is an infinite trace such that, for every state $i \geq 0$, 
 $$\sigma_i^{\theta, E} = E_i \cup \theta(E_0 \cdot E_1 \cdots E_i).$$  
 Note that $E_i \subseteq \E$ and $\theta(E_0 \cdot E_1 \cdots E_i) \subseteq \Y$.
\begin{example}[Environment plays and winning strategies]
\label{ex:numberzero}
Let $\E = \{p\}$ and $\Y = \{a,b\}$. Let $\varphi$ be the following \ltl formula:
$$
\varphi = \alw( (\neg p \to \alw \neg b ) \wedge (p \to (\alw a \, \vee \, \alw b) )).
$$
Suppose that $\theta$ is a system strategy such that, for every environment play
$
E = E_0 \cdot E_1 \cdot E_2 \cdots
$
and every $i \geq 0$,
$
\theta(E_0 \cdot E_1 \cdots E_i)=\{a\}.
$
Then $\theta$ is a winning strategy for $\varphi$. Indeed, $\theta$ always makes $a$ true and $b$ false, independently of the environment play. Hence, for every trace generated by~$\theta$, the formula
$
(\neg p \to \alw \neg b) 
$
holds at every time point, since $b$ is always false. Likewise,
$
(p \to (\alw a \, \vee \, \alw b))
$
holds at every state, since $a$ is always true.  Therefore, for every environment play $E$, we have
$
\sigma^{\theta,E} \models \varphi.
$

By contrast, a strategy that responds to the occurrence of $p$ by enforcing $\alw b$ cannot be winning. Indeed, if the environment later switches to $\neg p$, then the formula requires $\alw \neg b$ from that point onward. Thus, the obligations imposed by the two cases are incompatible, and no such strategy can satisfy $\varphi$ against all environment plays.
\end{example}
\begin{definition}[Realisability and synthesis]
An \ltl formula is realisable if there exists a winning strategy for it. Synthesis is the process of automatically constructing such a winning strategy whenever the formula is realisable.
\end{definition}

Unfortunately, both problems are 2EXPTIME-complete~\citep{PR89}, which makes them difficult to solve in practice.

\subsection*{{\bf Core idea behind Assume/Guarantee Contracts}} 

\vspace{1mm}
Contract-based design has been widely studied as a principled approach to the compositional development and verification of complex systems. In this setting, Assume/Guarantee reasoning provides a formal framework for decomposing system-level properties into component-level obligations, thereby enabling scalable verification. In the present paper, each component is viewed as a reactive system and is specified by an Assume/Guarantee contract.
 Such a contract consists of a pair $(A,G)$, where $A$, the assumption, is an \ltl formula describing the expected behaviour of the environment, and $G$, the guarantee, is an \ltl formula describing the behaviour that the system must ensure whenever the environment satisfies $A$. The use of such contracts~\citep{AL95,Benv18} supports modularity, because components can be analysed independently, compositionality, because components can be verified and synthesised separately, and clarity, because they explicitly distinguish what the system expects from what it promises.

An $(A,G)$ contract is said to be realisable if the system has a strategy that ensures $G$ on every play in which the environment satisfies $A$. In the present setting, deciding  realisability of an $(A,G)$ contract reduces to deciding realisability of an arbitrary \ltl formula. More precisely, the contract $(A,G)$ is realisable if and only if the \ltl formula $(A \to G)$ is realisable. Conversely, ordinary \ltl realisability can be viewed as a special case of contract realisability, since an \ltl formula $\varphi$ is realisable if and only if the contract $(\cT,\varphi)$ is realisable.

\subsection*{{\bf A sound algorithm for decomposing LTL formulae}} 

\vspace{2mm}

In the context of Assume Guarantee contracts, the work presented in \citep{I18, I24} proposes an algorithm for decomposing complex specifications expressed as \ltl formulae into smaller subspecifications. We refer to this algorithm as $\mathcal{DC}$.  

Their method uses a model checker to generate traces from which sets of independent variables are identified. These sets then serve as the basis for partitioning the original specification into independent subspecifications. The authors state that the method is sound but not complete.

In this section, we present $\mathcal{DC}$ independently of the $(A,G)$ contract framework and apply it to arbitrary \ltl formulae. We introduce the additional precondition that the input \ltl formula must be satisfiable, since an unsatisfiable formula is trivially unrealisable. This precondition can be verified through a single model-checking query that determines whether the formula is satisfiable.

We begin with two notions used in~\citep{I18, I24}. Although these concepts were originally formulated differently, our presentation captures the same underlying ideas and is formally equivalent to the original definitions.

    \begin{definition}  [Projection formula]
    \label{def:not}
    Let $\varphi(\E, \Y)$ be an \ltl formula. Let 
    $V  =  \{v_1, \ldots, v_{n_1}\} \subseteq \Y$ while 
    $\overline{V} = (\Y\setminus V) = \{w_1, \ldots, $ $w_{n_2}\}$. 
    The  projection formula of~$\varphi$ over~$V$, denoted as  $\varphi_V$, is the formula  $\varphi(\E, V, w'_1, \cdots w'_{n_2})$, where $w'_j~(1\leq j \leq n_2)$ is a fresh variable.
    \end{definition}
This notion of projection formula corresponds to the notion of projection in~\citep{I24} (see Definition 5.1).

\begin{example}
Let  $\varphi = \alw(p \to (a \vee (b \wedge c))) \wedge \ev(p \to  d) \wedge \ev(\neg p \to \neg d)$,  where $\E = \{p\}$ and $\Y = \{a, b, c, d\}$.
Hence, picking new variables $a', b', c', d'$:
\begin{eqnarray*}
\qquad \varphi_{\{a, b\}} & = & \alw(p \to (a \vee (b \wedge c'))) \wedge \ev(p \to d') \wedge \ev(\neg p \to \neg d')\\ 
\qquad \varphi_{\{c, d\}} & = &  \alw(p \to (a' \vee (b' \wedge c))) \wedge \ev(p \to d) \wedge \ev(\neg p \to \neg d).
\end{eqnarray*}
\end{example}

Next, we introduce the concept of an independent set of system variables. It corresponds to Definition 5.3 in \citep{I24}.
    
\begin{definition} [Independent set of variables]
\label{def:first_ind}
Let $\varphi(\E, \Y)$  be an \ltl formula and let $V \subseteq \Y$.
 The set $V$ is independent  in~$\varphi$ if and only if $(\varphi_V \wedge \varphi_{\overline{V}}) \rightarrow \varphi$ is a valid formula, i.e, for every trace $\sigma$, if $\sigma \models (\varphi_V \wedge \varphi_{\overline{V}})$ then  $\sigma \models \varphi$.
\end{definition}
\begin{example}
Let $\E = \{p\}$ and $\Y = \{a,b\}$. Consider the following formulae:
$
 \varphi  =   \alw(p \to \nx(a \wedge b)) $ and $ \psi =   \ev(p \to \nx(a \wedge b)). 
$
It is easy to see that every trace that is a model of 
$$\varphi_{\{a\}} \wedge \varphi_{\{b\}} =  \alw(p \to \nx(a \wedge b')) \wedge  \alw(p \to \nx(a' \wedge b))$$
also satisfies  $\varphi = \alw(p \to \nx(a \wedge b))$. Hence, the singleton sets $\{a\}$ and $\{b\}$ are independent in $\varphi$.
By contrast, this is not the case for $\psi$. Consider the trace $\sigma = \{p\}, \{p, a', b\}, \{p, a, b'\}, \{p, a\}^\omega$. Then $\sigma$ is a model of
$$\psi_{\{a\}} \wedge \psi_{\{b\}} = \ev(p \to \nx(a \wedge b')) \wedge  \ev(p \to \nx(a' \wedge b))$$
but it is not a model of $\psi =  \ev(p \to \nx(a \wedge b))$. Therefore, neither $\{a\}$ nor $\{b\}$ is independent in $\psi$.

The situation for $\varphi$ reflects the fact that $\varphi$ can be decomposed into the two smaller formulae
$\varphi_1 = \alw (p \to \nx \,a)$ and  $\varphi_2 = \alw (p \to \nx\, b)$, so that $\varphi$  and $(\varphi_1 \wedge \varphi_2)$ are logically equivalent.  Consequently, $\varphi$ is realisable if and only if both $\varphi_1$ and $\varphi_2$ are realisable.
For $\psi$, no analogous decomposition into
$
\psi_1 = \ev(p \to \nx a)$ and
$\psi_2 = \ev(p \to \nx b)
$
preserves logical equivalence with $\psi$. Intuitively, the two projected eventualities may be satisfied at different time points, while $\psi$ requires a single time point at which both $a$ and $b$ hold in the next state.
\end{example}

The following lemma establishes a syntactic property of the projection formulae used in Algorithm~${\cal DC}$.

\begin{lemma}
\label{lem:alg-first}
Let $\varphi$ be an \ltl formula and let $V \subseteq \Y$.  If there exists  a trace~$\sigma$ such that  $\sigma \models (\varphi_V \wedge \varphi_{\overline{V}} \wedge \neg \varphi)$, then there exist a state $i \geq 0$, and a variable $w \in \overline V$ such that  $(w' \in \sigma_i \iff w \not \in \sigma_i)$.
\end{lemma}

\begin{proof}
Assume that
$
\sigma \models (\varphi_V \wedge \varphi_{\overline V} \wedge \neg \varphi).
$
Suppose, towards a contradiction, that for every $i \geq 0$ and every $w \in \overline V$,
$
w' \in \sigma_i \iff w \in \sigma_i.
$
Then every fresh variable $w'$ has exactly the same truth value as $w$ at every state of~$\sigma$. Since~$\varphi_V$ is obtained from $\varphi$ by replacing each variable $w \in \overline V$ with its fresh copy $w'$, it follows that
$
\sigma \models \varphi_V  \Rightarrow  \sigma \models \varphi.
$
This contradicts the assumption that $\sigma \models \neg \varphi$.
\end{proof}

\begin{algorithm}[H]
\label{alg:one}
{\small
\caption{{\small ${\cal DC}$}}
%\SetAlgoLined
\SetKwInOut{Input}{Input} %renombramos input como entrada
\SetKwInOut{Output}{Output} %renombramos output como salida
\Input {A satisfiable \ltl formula $\varphi$ over $\E \cup\Y$}
\Output {clusters is a partition of $\Y$ into sets of independent variables}
clusters,  R $\gets \emptyset, \Y$;\\
\While {$R \neq \emptyset$}  {
choose $x \in R$;\\ 
$V$, passed  $\gets \{x\}$,  $false$;\\ 
\Repeat {passed}  {
$\overline{V} \gets \Y\setminus V$;\\ 
passed, $\sigma$ $\gets$ {\tt CheckValidity}$( (\varphi_V \wedge \varphi_{\overline{V}}) \rightarrow \varphi)$;\\ 
\If{not passed }{
$ D \gets$ {\tt ParseTrace}($\sigma$);\\
$ V  = V \cup D$;\\
}
}
clusters $\gets$ clusters $\cup \, \{V\}$;\\
$ R \gets \Y\setminus (\textstyle{\bigcup_{S \,\in\, \mbox{clusters}} \textstyle{S}} )$\ 
}
\Return clusters;
}
\end{algorithm}

\vspace{2mm}

 Algorithm~\ref{alg:one} presents the function ${\cal DC}$, which uses Definition~\ref{def:first_ind} as an operational criterion for searching independent sets of variables with the help of a model checker. Starting from a single variable, it incrementally enlarges a candidate set until the validity condition from Definition~\ref{def:first_ind} is satisfied. When the condition fails, a counterexample trace returned by the model checker is analysed in order to identify additional dependent variables that must be added to the candidate set. 
We illustrate how the function ${\cal DC}$ works by means of two examples.

\begin{example}
\label{ex:new1}
Let $\E = \{p\}$, $\Y = \{a,b\}$, and
$
\varphi = \ev(p \to \nx(a \wedge b)).
$
Suppose that ${\cal DC}$ begins by selecting $a \in \Y$ in line 3. Then, after lines 4 and 6, we have $V = \{a\}$ and $\overline{V} = \{b\}$. In line 7, the model checker is asked whether the formula
$
\Phi = (\varphi_{\{a\}} \wedge \varphi_{\{b\}}) \rightarrow \varphi
$
is valid. In this case,
$$
\Phi =
[\ev(p \to \nx(a \wedge b')) \wedge \ev(p \to \nx(a' \wedge b))]
\rightarrow
\ev(p \to \nx(a \wedge b)).
$$
Since $\Phi$ is not valid, the model checker returns a trace $\sigma$ satisfying $\neg \Phi$. For example, it may return
$
\sigma = \{p\}, \{p,a',b\}, \{p,a,b'\}, \{p,a\}^\omega.
$
Thus, after line~7, we have passed = $false$ together with the counterexample trace $\sigma$.

In line 9, the function {\tt ParseTrace} identifies the variables responsible for the failure of validity. These are precisely the variables whose values differ from those of their corresponding primed versions at some state of the trace~$\sigma$. Note that Lemma~\ref{lem:alg-first} guarantees the existence of at least one such variable. In this example, $b$ is identified in this way, so line 10 updates $V$ to $\{a,b\}$.

During the second iteration, the model checker is then asked to verify
$(\varphi_{\{a,b\}} \wedge \varphi_{\emptyset}) \rightarrow \varphi$ = 
$[\ev(p \to \nx(a \wedge b)) \wedge \ev(p \to \nx(a' \wedge b'))]
\rightarrow
\ev(p \to \nx(a \wedge b))$.
This formula is valid. Hence, passed is set to $true$  and ${\cal DC}$ terminates with the single set of independent variables $\{a,b\}$.
\end{example}

\begin{example}
\label{ex:intro}
Let $\E = \{p\}$, $\Y = \{t,v,w,x\}$, and let $\varphi$ be the formula

\vspace{-4mm}

$$ 
\alw ( (p \to \nx (t \vee v)) \wedge (\neg p  \to \nx (w \vee x) )) 
$$

\noindent Suppose that ${\cal DC}$ starts by selecting $w \in \Y$. Then it checks whether
$
\Phi = (\varphi_{\{w\}} \wedge \varphi_{\{t,v,x\}}) \rightarrow \varphi
$
is valid. In this case, $\Phi$ is not valid and the model checker may return the trace
$
\sigma = \{t', v , w', x'\}^\omega.
$

It is easy to see that $\sigma \models (\varphi_{\{w\}} \wedge \varphi_{\{t,v,x\}})$, but $\sigma \not \models \varphi$.
The failure occurs already at the initial state: since $p$ is false at state $0$, the formula $\varphi$ requires $(w \vee x)$ to hold at state $1$, but both $w$ and $x$
are false there. Therefore, $\Phi$ is invalid.

The function {\tt ParseTrace} then identifies the variables responsible for this failure. In this example, it returns
$
D = \{t,v,x\}.
$ 
Consequently, ${\cal DC}$ merges all system variables into a single cluster and outputs $\{t,v,w,x\}$.
\end{example}

The running time of {\small ${\cal DC}$} is polynomial in the number of variables in $\Y$, multiplied by the cost of the validity checks delegated to the model checker. For full \ltl, validity checking is PSPACE-complete, since an \ltl formula is valid if and only if its negation is unsatisfiable..

Furthermore, the algorithm ${\cal DC}$ is sound, because every set $V$ placed in clusters satisfies the condition of Definition~\ref{def:first_ind}, and is therefore independent in the input formula. However, ${\cal DC}$ is not complete. Consider the formula $\varphi$ in Example~\ref{ex:intro}. The algorithm may return the single set $\{t,v,w,x\}$, even though a finer decomposition exists, namely into the two independent subsets $\{t, v\}$ and $\{w,x\}$.
Indeed, the \ltl formula $\varphi$ 
is logically equivalent to the \ltl formula
$ \alw (p \to \nx (t \vee v)) \wedge  \alw (\neg p \to \nx (w \vee x))$.
Hence, the specification can be decomposed into two smaller independent parts, although ${\cal DC}$ may fail to detect this decomposition.

\section{Semantic notions underlying algorithm ${\cal DC}$} 

We formalise independence at the level of traces, using projection and join operators inspired by relational algebra~\citep{C70}.

From now on, we omit the qualifier \ltl\ and simply write formulae. Unless stated otherwise, all formulae are over variables in $\E \cup \Y$, and traces belong to $(2^{\E \cup \Y})^\omega$.
In contexts involving projection formulae, traces may also range over the expanded alphabet $\E \cup \Y \cup \Y'$, where $\Y'=\{y' \mid y \in \Y\}$ is a fresh set of primed copies of the variables in $\Y$. We use $\alpha$, $\beta$, $\gamma$ for traces over $\E \cup \Y$, and $\sigma$, $\tau$ for traces over $\E \cup \Y \cup \Y'$.

We begin with the projection operator on traces, which always preserves environment variables and keeps only the variables selected from~$\Y$~$\cup$~$\Y'$.
\begin{definition}[Trace projection]
\label{def:p-res}
 Let $\sigma$ be a trace, and let $V \subseteq \Y\cup \Y'$. The projection of $\sigma$ over $V$, denoted by $\sigma\res V$, is  the trace such that for all $i \geq 0$, $(\sigma\res V)_i = (\sigma_i \cap \E) \cup (\sigma_i \cap V)$.\\[-3mm]
\end{definition}

The next example illustrates the effect of projection on traces.
\begin{example}
\label{ex:numberone}
Consider the traces $\sigma = \{p,a,d, d'\}^\omega$ and  $\beta = \{p,b, c, d\}^\omega$. Then  $\sigma\res \{a\} = \{p,a\}^\omega$ and  $\beta\res \{a\} = \{p\}^\omega$.
\end{example}

We extend Definition \ref{def:p-res} to sets of traces. 

\begin{definition}[Projection of a set of traces]
\label{def:model_p}
Let $V \subseteq \Y \cup \Y'$ and let $\Sigma$ be a set of traces. The projection of $\Sigma$ over $V$, denoted by $\Sigma \res V$, is defined by
$
\Sigma \res V = \{\sigma \res V \mid \sigma \in \Sigma\}.
$
\end{definition}

We now introduce the join operator, again from relational algebra.

\begin{definition}[Join]
\label{def:model-c}
Let $V_1,V_2 \subseteq \Y$. Let $\Sigma_1$ and $\Sigma_2$ be sets of traces over $\E \cup V_1$ and $\E \cup V_2$, respectively. The \emph{join} of $\Sigma_1$ and $\Sigma_2$, denoted by $\Sigma_1 \Join \Sigma_2$, is the set of traces over $\E \cup V_1 \cup V_2$ defined by
$$
\Sigma_1 \Join \Sigma_2
=
\{\sigma \mid (\sigma \res V_1) \in \Sigma_1 \text{ and } (\sigma \res V_2) \in \Sigma_2\}.
$$
\end{definition}

\begin{example}
\label{ex:numbertwo}
Let
$
\varphi = \alw\bigl((p \to (a \vee b)) \wedge (\neg p \to (\neg a \wedge b)) \wedge c\bigr),
$ where $\E = \{p\}$ and  $\Y = \{a, b, c\}$.
We show that the infinite trace 
$
\{p,c\}^\omega$ belongs to $(\Sigma_{\varphi} \res \{b,c\}) \Join (\Sigma_{\varphi} \res \{a\}).
$
Consider the two traces
$
\alpha = \{p,a,c\}^\omega$
 and
 $\beta = \{p,b,c\}^\omega.
$
Both belong to $\Sigma_{\varphi}$, since both are models of $\varphi$. 
Now consider the trace
$
\gamma = \{p,c\}^\omega.
$
Then
$
\gamma \res \{b,c\} = \{p,c\}^\omega = \alpha \res \{b,c\},
$
and since $\alpha \in \Sigma_\varphi$, it follows that
$
\gamma \res \{b,c\} \in \Sigma_\varphi \res \{b,c\}.
$
Similarly,
$
\gamma \res \{a\} = \{p\}^\omega = \beta \res \{a\},
$
and since $\beta \in \Sigma_\varphi$, we have
$
\gamma \res \{a\} \in \Sigma_\varphi \res \{a\}.
$
Therefore, by Definition~\ref{def:model-c} of join,
$
\gamma = \{p,c\}^\omega \in (\Sigma_{\varphi} \res \{b,c\}) \Join (\Sigma_{\varphi} \res \{a\}). 
$
Note that $\gamma$ does not belong to~$\Sigma_\varphi$.
\end{example}

The following properties of the projection and join operators are standard in relational algebra~\citep{YP82} and remain valid in the present setting.

 Let $U, V, V_1, V_2 \subseteq \Y$ and
let $\Sigma$, $\Sigma_1$, $\Sigma_2$  be three sets of traces over $\E \cup \Y$, $\E \cup V_1$, and $\E \cup V_2$, respectively. Then the following properties hold: 

\begin{enumerate}
\item[$(a)$]
Commutativity,  associativity, and monotonicity.
\begin{itemize}
\item[--]
$\Sigma_1  \Join \Sigma_2 = \Sigma_2 \Join \Sigma_1 $
\item[--]
$(\Sigma_1  \Join \Sigma_2) \Join \Sigma = \Sigma_1 \Join (\Sigma_2 \Join \Sigma) $
\item[--]
If $\Sigma_1 \subseteq \Sigma'_1$, then  $\Sigma_1 \Join \Sigma \subseteq \Sigma'_1 \Join \Sigma$.\\[-3mm]
\end{itemize}
\item[$(b)$]
Idempotency of join. 
\begin{itemize}
\item[--]
$\Sigma  \Join (\Sigma\res U)  = \Sigma$. In particular $\Sigma  \Join (\Sigma\res \emptyset)  = \Sigma$ and 

$(\Sigma\res U)  \Join (\Sigma\res U)  = (\Sigma\res U) $. \\[-3mm]
\end{itemize}
\item[$(c)$]
Compatibility of projection with join. 
\begin{itemize}
\item[--]
$\Sigma\res (U \cup V) \subseteq (\Sigma\res U) \Join (\Sigma\res V)$\\[-3mm]
\end{itemize}
\item[$(d)$]
Distributivity of join over projection. \\
Let $X \subseteq V_1$ and $Y \subseteq V_2$ with $V_1 \cap V_2  = \emptyset$. 
\begin{itemize}
\item[--]
$(\Sigma_1 \Join \Sigma_2)\res (X \cup Y) = (\Sigma_1\res X) \Join (\Sigma_2\res Y)$
\end{itemize}
\end{enumerate}

\section{The meaning of independence in the ${\cal DC}$ algorithm}

In this section, we formalise the notion of independence underlying ${\cal DC}$ in terms of the join operator.

\begin{definition}[Independence via the join operator]
\label{def:model-ind}
Let $\varphi$ be a formula, and let $V \subseteq \Y$. We say that $V$ is \emph{join} independent in $\varphi$ if and only if
$$
(\Sigma_{\varphi} \res V) \Join (\Sigma_{\varphi} \res \overline{V}) = \Sigma_{\varphi},
$$
where $\overline{V} = \Y \setminus V$.
\end{definition}

The following lemma reveals the close relationship between models of projection formulae and projections of models of these formulae.

\begin{lemma}
\label{lem:totallynew}
Let $\varphi$ be a formula, and let $V, W \subseteq \Y$ be such that $V \cup W = \Y$. Let $Z = \{z_1, z_2, \cdots z_k\} \subseteq \Y$.
\begin{itemize}
\item[(i)]
Let $\sigma$ be a trace over the expanded alphabet containing both the variables in $\Y$ and their primed copies. If
$
\sigma \models (\varphi_V \wedge \varphi_W \wedge \bigwedge_{i=1}^k \alw(z_i \leftrightarrow z_i' )),
$
then the projected trace
$
\sigma \res \Y \in (\Sigma_\varphi \res (V \cup Z)) \Join (\Sigma_\varphi \res (W\cup Z)).
$
\item[(ii)]
Let $\alpha$ be a trace over $\E \cup \Y$. If 
$
\alpha \in (\Sigma_\varphi \res (V \cup Z)) \Join (\Sigma_\varphi \res (W\cup Z)),
$
then there exists an extension of $\alpha$, denoted by $\widehat\alpha$, over the expanded alphabet such that
$
\widehat\alpha \models (\varphi_V \wedge \varphi_W \wedge \bigwedge_{i=1}^k \alw(z_i \leftrightarrow z_i' )).
$
\end{itemize}
\end{lemma}

\begin{proof}
For item $(i)$, assume that
$
\sigma \models (\varphi_V \wedge \varphi_W \wedge \bigwedge_{i=1}^k \alw(z_i \leftrightarrow z_i' ))$.
From $\sigma \models \varphi_V$, we define a trace $\gamma$ over $\E \cup \Y$ as follows. The values of the variables in $V$ are kept unchanged, while each variable in $W$ is interpreted according to the value of its primed counterpart in $\sigma$. Note that $\varphi_V$ does not contain primed copies of variables in $V \cap W$. Moreover,  $\varphi_V  \wedge \bigwedge_{i=1}^k \alw(z_i \leftrightarrow z_i' )$ enforces that the variables in~$Z$ take the same value in both their primed and non-primed versions.
 Then $\gamma \models \varphi$, and
$
\gamma \res (V \cup Z) = \sigma \res (V \cup Z).
$
Hence
$
\sigma \res (V \cup Z) \in \Sigma_\varphi \res (V \cup Z).
$

Similarly, from $\sigma \models \varphi_W$, we obtain $
\sigma \res (W \cup Z) \in \Sigma_\varphi \res (W \cup Z)$.
Therefore, by Definition~\ref{def:model-c},
$
\sigma \res \Y \in (\Sigma_\varphi \res (V \cup Z) \Join (\Sigma_\varphi \res (W \cup Z)).
$

\vspace{2mm}

For item $(ii)$, assume that
$\alpha \in (\Sigma_{\varphi}\res (V \cup Z))   \Join  (\Sigma_{\varphi}\res (W \cup Z))$.
Since 
$\alpha \res (V \cup Z)$ belongs to $(\Sigma_{\varphi} \res (V \cup Z))$, there exists an extension of  $\alpha \res (V \cup Z)$, denoted by~$\alpha_v$, which is a model of~$\varphi$. Similarly,  since 
$\alpha \res (W \cup Z)$ belongs to $(\Sigma_{\varphi} \res (W \cup Z))$, there exists an extension of  $\alpha\res (W \cup Z)$, denoted by $ \alpha_w$, which is a model of~$\varphi$. 
We define the extension of  $\alpha$, denoted by $\widehat{\alpha}$, as the trace that, for every state $i \geq 0$:
\begin{eqnarray*}
\widehat{\alpha}_i & = & \alpha_i \cup  \{z' \mid z \in Z \mbox{ and } z \in \alpha_i\}  \cup \\
& &  \{v' \mid v \in V\setminus Z \mbox{ and } v \in (\alpha_w)_i \}  \cup \\
& & \{w' \mid w \in W\setminus Z \mbox{ and } w \in (\alpha_v)_i \}.
\end{eqnarray*}
By construction,  $\widehat{\alpha}\models (\varphi_V \wedge \varphi_W \wedge \bigwedge_{i=1}^k \alw(z_i \leftrightarrow z_i' ))$. 

\end{proof}

\begin{example}
\label{ex:surprise0}
Let $\E = \{p\}$, $\Y = \{a,b,c\}$, and
$$
\varphi = \ev((p \to (a \vee b)) \wedge c) \wedge (\neg p \to \neg c).
$$
Consider the sets $V= \{a,b\}$, $W = \{a,c\}$, $Z = \emptyset$, and  the trace
$
\alpha = \{p,a\}, \{p,c\}^\omega.
$
We show that
$$
\alpha \in (\Sigma_\varphi \res \{a,b\}) \Join (\Sigma_\varphi \res \{a, c\}).
$$
Indeed,
$
\alpha \res \{a,b\} = \{p,a\}, \{p\}^\omega,
$
and this belongs to $\Sigma_\varphi \res \{a,b\}$, since the extension
$
\alpha_{\{a,b\}} = \{p,a,c\}, \{p\}^\omega
$
is a model of $\varphi$. Also,
$
\alpha \res\{a, c\} = \alpha,
$
and this belongs to $\Sigma_\varphi \res \{a, c\}$, since the extension
$
\alpha_{\{a,c\}} = \{p, a\}, \{p,b,c\}^\omega
$
is  a model of $\varphi$.
Hence $\alpha$ belongs to
$
(\Sigma_\varphi \res \{a,b\}) \Join (\Sigma_\varphi \res \{a, c\}).
$

Moreover, $\alpha$ can be extended to the trace $\widehat{\alpha} = \{p,a,a',c'\}, \{p,c,b'\}^\omega$, which is a model of 
$
\varphi_{\{a,b\}} \wedge \varphi_{\{a, c\}}. 
$
\end{example}

The next lemma establishes the equivalence between the notion of independent sets as defined in~\citep{I18, I24}   and the notion of join-independent sets introduced in Definition~\ref{def:model-ind}.

\begin{lemma}
\label{lem:first}
Let $\varphi$ be a  formula, and let $V \subseteq \Y$. Then $V$ is join-independent in $\varphi$ if and only if the formula
$
(\varphi_V \wedge \varphi_{\overline V}) \rightarrow \varphi
$
is valid. That is, for every trace $\sigma$, if
$
\sigma \models (\varphi_V \wedge \varphi_{\overline V}),
$
then
$
\sigma \models \varphi.
$
\end{lemma}

\begin{proof}
By Definition~\ref{def:model-ind}, the set $V$ is join-independent in~$\varphi$ if and only if
\begin{equation}
\label{eq:one}
(\Sigma_{\varphi}\res V) \Join (\Sigma_{\varphi}\res \overline{V}) = \Sigma_{\varphi}.
\end{equation}
Assume first that Equation~\eqref{eq:one} holds. Let $\sigma$ be a trace such that it is a model of $(\varphi_V \wedge \varphi_{\overline{V}})%\footnote{If $(\varphi_V \wedge \varphi_{\overline{V}})$ is unsatisfiable, then the forward direction is trivial.}
$.  By Lemma~\ref{lem:totallynew}, item $(i)$, taking as $W$ the set $\overline V$ and $Z = \emptyset$, we have
$
 \sigma\res \Y \in  
  (\Sigma_{\varphi}\res V) \Join (\Sigma_{\varphi}\res \overline{V})   
$.
Hence $\sigma \res \Y \models \varphi$. Since $\varphi$ contains only variables from $\E \cup \Y$, its truth depends only on the restriction of the trace to these variables. Therefore,
$
\sigma \models \varphi.
$
This proves that
$
(\varphi_V \wedge \varphi_{\overline{V}}) \rightarrow \varphi
$
is valid.

Conversely, assume that for every trace $\sigma$ that is a model  of $(\varphi_V \wedge  \varphi_{\overline{V}})$, we have $\sigma \models \varphi$.
Take any trace~$\alpha$  in  $(\Sigma_{\varphi}\res V) \Join  (\Sigma_{\varphi}\res \overline{V})%\footnote{If $(\Sigma_{\varphi}\res V) \Join  (\Sigma_{\varphi}\res \overline{V})$ is empty, then trivially $(\Sigma_{\varphi}\res V) \Join  (\Sigma_{\varphi}\res \overline{V}) \subseteq \Sigma_{\varphi}$.}
$. 
 By Lemma~\ref{lem:totallynew}, item $(ii)$, (taking $W = \overline V$ and $Z = \emptyset$), there exists an extension $\widehat{\alpha}$  of $\alpha$ such that  $\widehat{\alpha} \models (\varphi_V \wedge  \varphi_{\overline{V}})$. By assumption, it follows that $\widehat{\alpha}\models \varphi$. Again, since~$\varphi$ depends only on the original variables, we obtain  $
\widehat{\alpha} \res \Y = \alpha$ and $\alpha \models \varphi$. Hence $
\alpha \in \Sigma_{\varphi}.$
 Therefore, $(\Sigma_{\varphi}\res V) \Join  (\Sigma_{\varphi}\res\ \overline{V}) \subseteq\Sigma_{\varphi}$. 
  For the reverse inclusion, since 
 every trace in $\Sigma_\varphi$ is over $\E \cup \Y$, we have
$
\Sigma_\varphi \res \Y = \Sigma_\varphi.
$
 As $V \cup \overline V = \Y$, property~(c) yields
$$
\Sigma_\varphi
=
\Sigma_\varphi \res \Y
=
\Sigma_\varphi \res (V \cup \overline V)
\subseteq
(\Sigma_\varphi \res V) \Join (\Sigma_\varphi \res \overline V).
$$
 So $(\Sigma_{\varphi}\res V) \Join (\Sigma_{\varphi}\res \overline{V}) = \Sigma_{\varphi}$ and $V$ is join-independent in~$\varphi$.
 \end{proof}

We will use the contrapositive of Lemma~\ref{lem:first}, stated below. 

\begin{corollary}
\label{cor:first}
Let $\varphi$ be a formula, and let $V \subseteq \Y$. The set $V$ is not join-independent in $\varphi$ if and only if  $(\varphi_V \wedge \varphi_{\overline V} \wedge \neg \varphi)$ is satisfiable.
\end{corollary}

The previous lemma and corollary establish that the notion of independence in Definition~\ref{def:first_ind} is equivalent to the notion of join-independence. From now on, we simply speak of independent and dependent sets of variables, where dependent means not independent. In addition, for simplicity, we will sometimes omit  \emph{(in)dependent in} $\varphi$ and write only \emph{(in)dependent} whenever~$\varphi$ is clear from the context.

\begin{example}
\label{ex:surprise}
Let $\E = \{p\}$, $\Y = \{a, b, c\}$, and 
$$\varphi = \ev((p \to (a \vee b)) \wedge c) \wedge (\neg p \to  \neg c)$$ 
as in Example~\ref{ex:surprise0}.
Consider the trace $\sigma = \{p, a, c'\}, \{p, a', c\}, \{p, c\}^\omega$. Clearly, $\sigma \models \varphi_{\{a,b\}} \wedge \varphi_{\{c\}}$, where
\begin{eqnarray*}
\varphi_{\{a,b\}}  & = &    \ev((p \to (a \vee b)) \wedge c') \wedge (\neg p \to  \neg c')  \\
\varphi_{\{c\}} & = &  \ev((p \to (a' \vee b')) \wedge c) \wedge (\neg p \to  \neg c). 
 \end{eqnarray*}
However, $\sigma \not\models \varphi$. Indeed, the eventuality in $\varphi$ requires some time point at which both $c$ and $(a \vee b)$ hold. In $\sigma$, $a$ holds only at the first state, where $c$ is false, while $c$ holds from the second state onward, where neither $a$ nor $b$ is true. Hence no state satisfies $(p \to (a \vee b)) \wedge c$. Therefore, by Corollary~\ref{cor:first}, both sets  $\{a,b\}$ and $\{c\}$ are dependent in $\varphi$.

Moreover,
$
\sigma \res \Y = \{p,a\}, \{p,c\}^\omega
$
belongs to
$
(\Sigma_\varphi \res \{a,b\}) \Join (\Sigma_\varphi \res \{c\}).
$
Nevertheless,
it does not belong to $\Sigma_\varphi$. Thus,
$
(\Sigma_\varphi \res \{a,b\}) \Join (\Sigma_\varphi \res \{c\}) \not\subseteq \Sigma_\varphi.
$
\end{example}

The next lemma gives another characterisation of independent sets.

\begin{lemma}
\label{lem:cero}
Let $\varphi$ be a formula, and let $V \subseteq \Y$. Then $V$ is independent in~$\varphi$ if and only if for every set $U \subseteq \overline V$,
$
\Sigma_\varphi \res (U \cup V) = (\Sigma_\varphi \res U) \Join (\Sigma_\varphi \res V).
$
\end{lemma}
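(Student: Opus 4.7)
The plan is to prove the two directions separately, with the reverse ($\Leftarrow$) direction being essentially a one-line instantiation, and the forward ($\Rightarrow$) direction leveraging the algebraic properties $(a)$--$(d)$ of projection and join that are listed right before Definition~\ref{def:model-ind}.

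For $(\Leftarrow)$, I would simply instantiate $U := \Y \setminus W$. Then $U \cap W = \emptyset$ and $U \cup W = \Y$. Since every trace in $\Sigma_\varphi$ is over $\E \cup \Y$, the projection $\Sigma_\varphi \res \Y$ equals $\Sigma_\varphi$ itself. Therefore the hypothesis yields $\Sigma_\varphi = (\Sigma_\varphi \res (\Y\setminus W)) \Join (\Sigma_\varphi \res W)$, which, by commutativity of join (property $(a)$), is exactly Definition~\ref{def:model-ind}.

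For $(\Rightarrow)$, fix $U \subseteq \Y$ with $U \cap W = \emptyset$. The inclusion $\Sigma_\varphi \res (U \cup W) \subseteq (\Sigma_\varphi \res U) \Join (\Sigma_\varphi \res W)$ is immediate from property $(c)$ (distributivity of projection over join), and requires no independence assumption. For the reverse inclusion, start from the independence hypothesis $\Sigma_\varphi = (\Sigma_\varphi \res W) \Join (\Sigma_\varphi \res (\Y\setminus W))$, and apply projection over $U \cup W$ to both sides. On the right-hand side I would apply property $(d)$ with $W_1 = W$, $W_2 = \Y \setminus W$ (which are disjoint), choosing $X = W \subseteq W_1$ and $Y = U \subseteq W_2$. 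This gives
\[
\Sigma_\varphi \res (U \cup W) = ((\Sigma_\varphi \res W)\res W) \Join ((\Sigma_\varphi \res (\Y\setminus W))\res U).
\]
Finally I would collapse the iterated projections: $(\Sigma_\varphi \res W)\res W = \Sigma_\varphi \res W$ by idempotency (property $(b)$), and $(\Sigma_\varphi \res (\Y\setminus W))\res U = \Sigma_\varphi \res U$ because $U \subseteq \Y\setminus W$ (a direct consequence of Definition~\ref{def:p-res}, since projecting twice by nested sets equals projecting by the inner one). This yields the desired equality.

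The only delicate point I expect is the step collapsing nested projections when the outer set is contained in the inner one; although it is not listed explicitly among properties $(a)$--$(d)$, it follows immediately from Definition~\ref{def:p-res} by a straightforward pointwise argument on states, and I would include a short justification rather than cite it as folklore. Beyond that, the proof is a clean algebraic manipulation using the listed properties.
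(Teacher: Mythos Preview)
Your proposal is correct and follows essentially the same route as the paper: the reverse direction is the instantiation $U=\Y\setminus W$, and the forward direction projects the independence equality over $U\cup W$ and applies property~$(d)$ with $\Sigma_1=\Sigma_\varphi\res W$, $\Sigma_2=\Sigma_\varphi\res(\Y\setminus W)$, $X=W$, $Y=U$, then collapses the nested projections. Two minor remarks: your separate appeal to property~$(c)$ for the $\subseteq$ inclusion is harmless but redundant, since the argument via~$(d)$ already yields the full equality; and your citation of property~$(b)$ for $(\Sigma_\varphi\res W)\res W=\Sigma_\varphi\res W$ is a mislabel---$(b)$ is idempotency of \emph{join}, whereas what you need here (and for $(\Sigma_\varphi\res(\Y\setminus W))\res U=\Sigma_\varphi\res U$) is the nested-projection fact you correctly justify afterwards directly from Definition~\ref{def:p-res}.
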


\begin{proof}
For the forward direction, assume that $V$ is independent. By Definition~\ref{def:model-ind}.

\vspace{-0.7cm}

\begin{eqnarray}
\label{eq:new}
\hspace{-3cm} \Sigma_{\varphi} = (\Sigma_{\varphi}\res V)   \Join  (\Sigma_{\varphi}\res \overline{V})
 \end{eqnarray}

\noindent
 Projecting in the equation~(\ref{eq:new}) over $V \cup U$, we obtain:
\begin{eqnarray*}  
&&\Sigma_{\varphi} \res (V \cup U) \; =   [(\Sigma_{\varphi}\res V)   \Join  (\Sigma_{\varphi}\res \overline{V})]\res (V \cup U) \; =\\
& & \mbox{\scriptsize{ by distributivity (see $(d)$) with $\Sigma_1 = (\Sigma_{\varphi}\res V), \Sigma_2 =(\Sigma_{\varphi}\res \overline{V}), X = V, \mbox{ and } Y = U$}}  \\
&& (\Sigma_{\varphi}\res V)   \Join  ((\Sigma_{\varphi} \res \overline{V})\res U) = \\
&& (\Sigma_{\varphi}\res V)   \Join  (\Sigma_{\varphi}\res U)  = \\
&&  \mbox{\scriptsize{ by commutativity (see $(a)$)}} \\
&&(\Sigma_{\varphi}\res U)   \Join  (\Sigma_{\varphi}\res V). 
\end{eqnarray*}
Thus, $\Sigma_{\varphi} \res (V \cup U) = (\Sigma_{\varphi}\res U)   \Join  (\Sigma_{\varphi}\res V)$.\\
For the backward direction, assume that for every $U \subseteq \overline V$,
$$
\hspace{-3cm} \Sigma_\varphi \res (U \cup V) = (\Sigma_\varphi \res U) \Join (\Sigma_\varphi \res V).
$$
Taking $U = \overline V$, we get
$
\Sigma_\varphi
=
\Sigma_\varphi \res (\overline V \cup V)
=
(\Sigma_\varphi \res \overline V) \Join (\Sigma_\varphi \res V).
$
Hence, by Definition~\ref{def:model-ind}, the set $V$ is independent in $\varphi$.

\end{proof}

With a clearer understanding of what it means for a set to be independent, the next section presents properties that enable us to conclude with a more sophisticated version of the ${\cal DC}$ algorithm that achieves completeness.

\section{Properties to ensure completeness}

To ensure completeness, it is not sufficient to detect that a set of variables is dependent. What is really needed is a criterion for identifying the variables that, although outside a given set $V$, must necessarily be included in it, to ensure that non-empty proper subsets of $V$ cannot be independent.
To establish this, one could use the  characterisation provided by Lemma~\ref{lem:cero}. However, this approach is highly inefficient since proving that a set $V$ is dependent would require finding a subset $U \subseteq \overline V$ such that $
(\Sigma_\varphi \res V) \Join (\Sigma_\varphi \res U) \neq \Sigma_\varphi \res (V \cup U)
$. On the other hand, when deciding whether a new variable $z$ must be included in a dependent set $V$, it is not sufficient to rely solely on the simple join test
$$
(\Sigma_\varphi \res V) \Join (\Sigma_\varphi \res \{z\}) \neq \Sigma_\varphi \res (V \cup \{z\}).
$$
By Lemma~\ref{lem:cero}, this test provides a sufficient condition to conclude that the dependency of $V$ is already induced by the variable $z$, and therefore that $z$ must be included in $V$. However, the converse does not hold. A variable may be forced to belong to a dependent $V$ even though this is not detected by the above test. Indeed, the necessity of such a variable may become visible only after fixing the behaviour of some additional variables. The next example illustrates this situation.

\begin{example}
Let $\varphi = \alw( p \to (a \vee (b \wedge c)))$, where $p \in \E$ and $\Y = \{a, b, c\} $. Using Lemma~\ref{lem:cero}, we observe that  $\{b\}$ is 
 dependent, since 
 $$(\Sigma_{\varphi}\res \{b\})   \Join  (\Sigma_{\varphi}\res \{a\}) \neq (\Sigma_{\varphi}\res \{a,b\}).$$
 Likewise, the set $\{a,b\}$ is dependent, because $(\Sigma_{\varphi}\res \{a, b\})   \Join  (\Sigma_{\varphi}\res \{c\}) \neq \Sigma_{\varphi}$.  Consequently, apart from the empty set, the only independent set in $\varphi$ is $\{a, b, c\}$.
 What is important here is that the relationship between the variables $b$ and $c$ is mediated by the variable~$a$. However, the direct test between $\{b\}$ and $\{c\}$ does not reveal this, since
$
(\Sigma_\varphi \res \{b\}) \Join (\Sigma_\varphi \res \{c\}) = \Sigma_\varphi \res \{b,c\}.
$
\end{example}

This example motivates the search for a more refined yet efficient mechanism that, given that $\{b\}$ is dependent, guarantees that the next variable selected to extend $\{b\}$ is $a$ rather than $c$.

Before presenting this refined mechanism, we establish two additional properties that are crucial for proving the completeness of the method.

\begin{lemma}
\label{lem:complement}
Let $\varphi$ be a formula and let $V \subseteq \Y$ be independent in $\varphi$.
If $U \subsetneq V$ is an independent subset, then $V \setminus U$ is
also independent.
\end{lemma}

\begin{proof}
To show that $V \setminus U$ is
independent, using Definition~\ref{def:model-ind}, we prove that 
$(\Sigma_\varphi\res V\setminus U) \Join (\Sigma_\varphi \res \overline{V\setminus U}) = \Sigma_\varphi$.
Since $U$ is independent, by Lemma~\ref{lem:cero}, 
\begin{equation}
\label{eq:otra1}
(\Sigma_\varphi \res U) \Join (\Sigma_\varphi\res V\setminus U) = (\Sigma_\varphi\res V)    
\end{equation}
Furthermore, since $V$ is independent,
\begin{equation}
\label{eq:otra2}
(\Sigma_\varphi \res V) \Join (\Sigma_\varphi\res \overline V) = \Sigma_\varphi    
\end{equation}
Then, 

\vspace{-0.8cm}

 \begin{eqnarray*} 
&& (\Sigma_\varphi\res V\setminus U) \Join (\Sigma_\varphi \res \overline{V\setminus U})\; = \\
&& (\Sigma_\varphi\res V\setminus U) \Join (\Sigma_\varphi \res \overline V \cup U)\; \subseteq \\
& & \mbox{\scriptsize{ by the $\Join$ property $(c)$ and monotonicity $(a)$}}  \\
& & (\Sigma_\varphi\res V\setminus U) \Join [(\Sigma_\varphi \res \overline V ) \Join (\Sigma_\varphi \res U )]\; =  \\
& & \mbox{\scriptsize{ by the $\Join$ associativity and commutativity (see $(a)$) }}  \\
&&   [(\Sigma_\varphi\res V\setminus U) \Join (\Sigma_\varphi \res U )] \Join (\Sigma_\varphi \res \overline V ) \; =  \\
& & \mbox{\scriptsize{ by the $\Join$ commutativity and Equation~\ref{eq:otra1} }}  \\
& & (\Sigma_{\varphi}\res V)   \Join  (\Sigma_{\varphi}\res \overline V) = \Sigma_{\varphi}  \mbox{\scriptsize{ using Equation~\ref{eq:otra2}. }} 
\end{eqnarray*}
Hence, $(\Sigma_\varphi\res V\setminus U) \Join (\Sigma_\varphi \res \overline{V\setminus U}) \subseteq \Sigma_\varphi$. For the reverse inclusion, property $(c)$ yields
$
\Sigma_\varphi
=
\Sigma_\varphi \res (V\setminus U \cup \overline{V\setminus U})
\subseteq
(\Sigma_\varphi\res V\setminus U) \Join (\Sigma_\varphi \res \overline{V\setminus U})
$. This concludes the proof of the lemma.
\end{proof}

The next lemma shows that, once an independent superset $U$ of $V$ is fixed, variables outside $U$ cannot witness the dependence of $V$.

%More precisely, if there exists a witness showing that $V$ is dependent while the variables in a set $Z \subseteq \overline V$ are forced to agree with their primed copies, then any additional variable outside $U$ may also be forced to agree with its primed copy without destroying satisfiability.

\begin{lemma}
\label{lem:escape}
Let $\varphi$ be a formula and let $V \subseteq W \subseteq \Y$, where $W$ is independent in $\varphi$.
Let $Z = \{z_1, z_2, \cdots, z_k\} \subseteq \overline V$.
If
$
(\varphi_V \wedge \varphi_{\overline V} \wedge \neg \varphi \wedge \bigwedge_{i=1}^k \alw(z_i \leftrightarrow z_i'))
$
is satisfiable, then for every $q \in \overline W$,
$
(\varphi_V \wedge \varphi_{\overline V} \wedge \neg \varphi \wedge \bigwedge_{i=1}^k \alw(z_i \leftrightarrow z_i') \wedge \alw(q \leftrightarrow q'))
$
is satisfiable.
\end{lemma}

\begin{proof}
Assume that
$
(\varphi_V \wedge \varphi_{\overline V} \wedge \neg \varphi \wedge \bigwedge_{i=1}^k \alw(z_i \leftrightarrow z_i'))
$
is satisfiable. Then there exists a trace $\sigma$ such that
$
\sigma \models
(\varphi_V \wedge \varphi_{\overline V} \wedge \neg \varphi \wedge \bigwedge_{i=1}^k \alw(z_i \leftrightarrow z_i')).
$
By Lemma~\ref{lem:totallynew}, item $(i)$, since $Z \subseteq \overline V$,
$
\sigma \res \Y \in (\Sigma_\varphi \res (V \cup Z)) \Join (\Sigma_\varphi \res \overline V).
$
 Given that $\sigma \models \neg \varphi$ and that $\varphi$ depends only on variables in $\E \cup \Y$, we also have
$
\sigma \res \Y \notin \Sigma_\varphi.
$
Hence
\begin{equation}
\label{eq:escape-short-1}
(\Sigma_\varphi \res (V \cup Z)) \Join (\Sigma_\varphi \res \overline V) \neq \Sigma_\varphi.
\end{equation}
Let $q \in \overline W$, and define
$
A = (V \cup Z) \cap W$ and 
$B = (V \cup Z) \cap \overline W.
$
Then $A \subseteq W$, $B \subseteq \overline W$, and $A \cup B = V \cup Z$.
Since $W$ is independent in $\varphi$,
$$
(\Sigma_\varphi \res W) \Join (\Sigma_\varphi \res \overline W)=\Sigma_\varphi.
$$
 Projecting in this equation over  $A \cup B$ and over $A \cup B \cup \{q\}$, and using the distributivity property~$(d)$, we obtain
\begin{align}
(\Sigma_\varphi \res A) \Join (\Sigma_\varphi \res B)
&=
\Sigma_\varphi \res (V \cup Z),
\label{eq:escape-short-2}
\\
(\Sigma_\varphi \res A) \Join (\Sigma_\varphi \res (B \cup \{q\}))
&=
\Sigma_\varphi \res (V \cup Z \cup \{q\}).
\label{eq:escape-short-3}
\end{align}
Now $B \subseteq \overline W \subseteq \overline V$ and $q \in \overline W \subseteq \overline V$, so
$
B \cup \{q\} \subseteq \overline V.
$
Therefore, by idempotency of join,
$$
(\Sigma_\varphi \res B) \Join (\Sigma_\varphi \res \overline V)
=
(\Sigma_\varphi \res (B \cup \{q\})) \Join (\Sigma_\varphi \res \overline V)
=
\Sigma_\varphi \res \overline V.
$$
Using \eqref{eq:escape-short-2}, \eqref{eq:escape-short-3}, associativity, and commutativity, it follows that
$$
(\Sigma_\varphi \res (V \cup Z)) \Join (\Sigma_\varphi \res \overline V)
=
(\Sigma_\varphi \res (V \cup Z \cup \{q\})) \Join (\Sigma_\varphi \res \overline V).
$$
Together with \eqref{eq:escape-short-1}, this yields
$$
(\Sigma_\varphi \res (V \cup Z \cup \{q\})) \Join (\Sigma_\varphi \res \overline V) \neq \Sigma_\varphi.
$$
Hence there exists a trace $\alpha$ such that
$$
\alpha \in (\Sigma_\varphi \res (V \cup Z \cup \{q\})) \Join (\Sigma_\varphi \res \overline V)
\quad\text{and}\quad
\alpha \notin \Sigma_\varphi.
$$
By Lemma~\ref{lem:totallynew}, item $(ii)$, since $Z \cup \{q\} \subseteq \overline V$, there exists an extension $\widehat\alpha$ of $\alpha$ such that
$
\widehat\alpha \models
(\varphi_V \wedge \varphi_{\overline V}
\wedge \bigwedge_{i=1}^k \alw(z_i \leftrightarrow z_i')
\wedge \alw(q \leftrightarrow q')).
$
Since $\alpha \notin \Sigma_\varphi$ and~$\varphi$ depends only on variables in $\E \cup \Y$, it follows that
$
\widehat\alpha \models \neg \varphi.
$
Therefore,
$
(\varphi_V \wedge \varphi_{\overline V} \wedge \neg \varphi
\wedge \bigwedge_{i=1}^k \alw(z_i \leftrightarrow z_i')
\wedge \alw(q \leftrightarrow q'))
$
is satisfiable.
\end{proof}

\section{A more sophisticated algorithm that is complete}

In this section, we propose a modification of ${\cal DC}$ in which the {\tt ParseTrace} function (Algorithm~\ref{alg:one}) is replaced by a new function, {\tt NewParseTrace} (see Algorithm~\ref{alg:two}). We then use Lemmas~\ref{lem:complement} and~\ref{lem:escape} to prove the correctness of the resulting algorithm, denoted by ${\cal NDC}$.

\vspace{2mm}

\begin{algorithm}[H]
\label{alg:two}
{\small
\caption{{\small {\tt NewParseTrace}}}
%\SetAlgoLined
\SetKwInOut{Input}{Input} %renombramos input como entrada
\SetKwInOut{Output}{Output} %renombramos output como salida
\Input{$\varphi, V,$ and a trace $\sigma \models (\varphi_V \wedge \varphi_{\overline{V}} \wedge \neg \varphi)$  provided by {\small ${\cal DC}$} }
\Output{a set $\{z\}$ where the variable witnesses the dependency of $V$} 
passed,   binds,  $\tau   \gets$ $false$,  $\cF$,   $\sigma$;\\
\While {not passed}  {
$Z =\{z \in \overline{V}:  z \in \tau_i \iff z' \not \in \tau_i  \mbox{ at some state } i \geq 0\}$;\\
$z \gets$ any variable in $Z$;\\
binds $\gets$ (binds $\vee$  $ \ev( z \leftrightarrow  \neg z'))$;\\
passed,  $\tau$ $\gets$ 
 {\tt CheckValidity}(binds $\vee$ $((\varphi_V \wedge \varphi_{\overline{V}}) \rightarrow \varphi))$;\\ 
}
\Return $\{z\}$;
}
\end{algorithm}

\vspace{2mm}

We explain ${\cal NDC}$ using Example~\ref{ex:intro}, previously introduced in Section~\ref{sec:preliminaries}. Unlike ${\cal DC}$, which identified only a single independent set, ${\cal NDC}$ successfully identifies two independent sets.

\subsection*{{\bf  Going back to Example~\ref{ex:intro}.}}

\noindent The input to the function ${\cal DC}$ was the formula
$$ 
\alw ( (p \to \nx (t \vee v)) \wedge (\neg p  \to \nx (w \vee x) )) 
$$
${\cal DC}$ started with $V = \{w\}$ and when asked about the validity of the formula  $(\varphi_{\{w\}} \wedge \varphi_{\{t, v, x\}}) \rightarrow \varphi$, the model checker returned  the  trace
 $\sigma = \{t', v , w', x'\}^\omega$.
Now, the new function {\tt NewParseTrace} is invoked with $\varphi$, $V$, and $\sigma$. In Line~1, the boolean variable passed is initialized to $false$, the formula binds is set to the constant formula $\cF$, and the trace $\tau$ is set to the input trace $\sigma$.
  
  The iteration of the function starts in Line~2, where  the set $Z$ (Line~3) is defined with all variables in $\overline{V}$, whose values  differ from those of their respective primed versions
at some state of the trace $\tau$. In this example $Z = \{t, v, x\}$.
In Line~4, some variable of $Z$ is selected and assigned to $z$. Suppose the chosen variable is $t$. Then, the function updates the variable binds with  the formula $($binds $\vee \; \ev( z \leftrightarrow  \neg z'))$ in Line~5. In the example, binds is set to $(\cF \vee \ev( t \leftrightarrow  \neg t'))$, which is equivalent to $\ev( t \leftrightarrow  \neg t')$. 

 Next, in Line~6, the function queries the model checker about the validity of the formula
$$\ev( t \leftrightarrow  \neg t') \vee ((\varphi_V \wedge \varphi_{\overline{V}}) \rightarrow \varphi)).$$
A possible response from the model checker could be the trace
$
\{t, t', v, w', x'\}^\omega
$.
At this point, since passed is $false$, another iteration begins with  $Z=\{v, x\}$.
If the function chooses $v \in Z$, the formula $(\ev( t \leftrightarrow  \neg t') \vee \ev( v \leftrightarrow  \neg v'))$ is assigned to binds, and the question of whether the formula
\[
\ev(t \leftrightarrow \neg t')
\;\vee\;
\ev(v \leftrightarrow \neg v')
\;\vee\;
\bigl((\varphi_V \wedge \varphi_{\overline{V}}) \rightarrow \varphi\bigr)
\]
is valid can be answered by the model checker with 
$
\{t,t',v,v',w',x'\}^{\omega}.
$

Since passed is $false$, another iteration begins with  $Z=\{x\}$ and the validity of the formula 
\[
\ev(t \leftrightarrow \neg t')
\;\vee\;
\ev(v \leftrightarrow \neg v')
\;\vee\;
\ev(x \leftrightarrow \neg x')
\;\vee\;
\bigl((\varphi_V \wedge \varphi_{\overline{V}}) \rightarrow \varphi\bigr)
\]
is certified by the model checker. Since
passed is now $true$, the iteration stops and the function returns~$\{x\}$ to the main algorithm, which adds  the variable~$x$ to $D$ and updates $V$ to $\{w, x\}$. 
The algorithm then queries the validity 
of the formula $(\varphi_{\{w,x\}} \wedge \varphi_{\{t,v\}} \to \varphi)$, receiving an affirmative answer. Therefore, it returns two independent sets: $\{t, v\}$ and $\{w, x\}$.\\[-2mm]

This example will be used in the last section. It is precisely Example~\ref{ex:numberzero} presented earlier. 
\begin{example}
\label{ex:last}
Let $\varphi = \alw( (\neg p \to \alw \neg b ) \wedge (p \to (\alw a \, \vee \, \alw b) ))$, where $\E = \{p\}$ and  $\Y = \{a, b\}$.
${\cal NDC}$
 may start by selecting  $a \in \Y$, and querying the model checker about the validity of $\Phi = (\varphi_{\{a\}} \wedge \varphi_{\{b\}}) \rightarrow \varphi$.
\begin{eqnarray*} 
\Phi = & ( &  \alw( (\neg p \to \alw \neg b' ) \wedge (p \to (\alw a \, \vee \, \alw b')))  \wedge\\
& & \alw( (\neg p \to \alw \neg b ) \wedge (p \to (\alw a' \, \vee \, \alw b)))\\
& ) & \to \alw( (\neg p \to \alw \neg b ) \wedge (p \to (\alw a \, \vee \, \alw b)))
\end{eqnarray*}
Since $\Phi$ is not valid,  the model checker may return  the trace $\sigma = \{p, a', b'\}^\omega$.
Then, the {\tt NewParseTrace} function is invoked with $\sigma$, $\{a\}$, and $\varphi$. With $Z = \{ b \}$, {\tt NewParseTrace} queries the model checker about the validity of 
$\ev( b \leftrightarrow  \neg b') \vee \Phi$. Since the model checker confirms that this formula is valid, the function returns $\{b\}$ and the algorithm  concludes by establishing  that $\{a, b\}$ is an independent set.
\end{example}

${\cal NDC}$  is sound and  its running
time is polynomial in~$|\Y|$ times the
cost of each query to the model-checker.
We now prove its correctness by showing that it always returns minimal independent sets---meaning that if a set of variables is classified as independent, none of its non-empty proper subsets  are independent.

\subsection*{\bf Correctness of the modified algorithm} 
 The soundness of ${\cal NDC}$ is immediate, while its completeness is established in the following theorem.

\begin{theorem}
\label{thm:look}
Given a formula $\varphi$ as input to ${\cal NDC}$, each iteration of the main loop computes a set $V$ that is independent in $\varphi$ and has no non-empty proper independent subsets.
\end{theorem}

\begin{proof}
In each iteration,  ${\cal NDC}$ takes a variable $x$ and starts with $V$~=~$\{x\}$. 
Then it executes the inner loop, adding new variables to $V$ until the formula
$
(\varphi_V \wedge \varphi_{\overline V}) \to \varphi
$
becomes valid. By Lemma~\ref{lem:first}, this means that $V$ is independent.

To show that no non-empty proper subsets of $V$ are independent, suppose to the contrary that such a subset exists.
Lemma~\ref{lem:complement} implies that
$V \setminus U$ is also independent.
Thus, possibly replacing $U$ by $V \setminus U$, we may assume that the initial
variable $x$ belongs to $U$.
Let
$
V_0=\{x\},\; V_1,\; \dots,\; V_k=V
$
be the sequence of sets constructed by the inner iteration of ${\cal NDC}$.
We show by induction that $V_i \subseteq U$ for every  $0 \leq i \leq k$.

The base case $V_0 \subseteq U$ is immediate.
Assume that $V_i \subseteq U$. 
At stage $i$, the iteration maintains a formula of the form
$$
((\varphi{_{V_i}} \wedge \varphi{_{\overline{V_i}}} )\to \varphi) \vee \bigvee_{j=1}^r \ev(z_i \leftrightarrow \neg z_i' ) 
$$
for some $\{z_1, z_2, \cdots , z_r\} \subseteq \overline{V_i}$, that is not valid. 

Suppose that the algorithm adds a variable $z \not \in U$.
By construction, this means that
$$
((\varphi{_{V_i}} \wedge \varphi{_{\overline{V_i}}} )\to \varphi) \vee \bigvee_{j=1}^r \ev(z_i \leftrightarrow \neg z_i' )  \vee \ev(z \leftrightarrow \neg z')
$$
is valid. In other words,
$$
(\varphi{_{V_i}} \wedge \varphi{_{\overline{V_i}}} \wedge \neg \varphi) \wedge \bigwedge_{j=1}^r \alw(z_i \leftrightarrow z_i' )  \wedge \alw(z \leftrightarrow z')
$$
is unsatisfiable.
This contradicts Lemma~\ref{lem:escape}.
Hence the variable $z$ added at stage $i$ must belong to $U$, and therefore
$V_{i+1} \subseteq U$.

We conclude that $V_i \subseteq U$ for every $i$, and in particular
$V=V_k \subseteq U$, contradicting $U \subsetneq V$.
\end{proof}

\begin{corollary}
The algorithm ${\cal NDC}$ returns a partition of $\Y$ into
minimal independent sets.
\end{corollary}

\begin{proof}
At each iteration of the main loop, ${\cal NDC}$ computes a set $V \subseteq \Y$ and adds it to the output partition. By Theorem~\ref{thm:look}, each such set is independent in $\varphi$ and has no non-empty proper independent subset.

Moreover, after adding $V$ to the partition, the algorithm removes all variables in $V$ from the set of remaining variables. Hence the sets computed in different iterations are pairwise disjoint. Since the loop stops only when no variables remain, the union of all computed sets is exactly $\Y$.
Therefore, ${\cal NDC}$ returns a partition of $\Y$ into minimal independent sets.
\end{proof}

\section{Conclusion}
This paper has analysed the decomposition algorithm ${\cal DC}$ from a semantic perspective and used that analysis to derive a complete refinement, namely ${\cal NDC}$. Our results are particularly relevant to the area of \ltl-based Assume/Guarantee contracts, where ${\cal DC}$ is used to decompose specifications into smaller subspecifications.

It is important to emphasise that all developments in this paper rely on the notion of independence adopted in~\citep{I18,I24}, which is formulated in terms of traces. This is natural in the satisfiability setting, where traces are the semantic objects used to interpret formulae. By contrast, in the setting of realisability, the relevant semantic objects are winning strategies rather than traces. This distinction is significant when one considers whether analogous results can be obtained in the realisability setting.

To reason about realisability and synthesis, one must work with winning strategies rather than with traces alone. Let
$$
\Sigma^{\cal W}_\varphi
=
\{\sigma^{\theta,E} \mid \theta \text{ is a winning strategy for } \varphi,\ E \in (2^\E)^\omega\}
$$
be the set of traces induced by winning strategies for $\varphi$. By Definition~\ref{def:winstr}, every such trace is a model of $\varphi$, and therefore
$
\Sigma^{\cal W}_\varphi \subseteq \Sigma_\varphi.
$
However, the converse inclusion fails: there are traces that satisfy $\varphi$ but cannot arise from any winning strategy. For instance, consider the formula 
from Examples~\ref{ex:numberzero} and~\ref{ex:last}:
$
\varphi = \alw\bigl((\neg p \to \alw \neg b) \wedge (p \to (\alw a \vee \alw b))\bigr),
$
where $p$ is the only environment variable. The trace $\{p,b\}^\omega$ belongs to $\Sigma_\varphi$, but it is not induced by any winning strategy for $\varphi$ (see Example~\ref{ex:numberzero}). Thus,
$
\Sigma_\varphi \not\subseteq \Sigma^{\cal W}_\varphi.
$

This observation raises two questions. The first is whether, when considering the realisability problem rather than the satisfiability problem, ${\cal NDC}$  produces a partition into sets of independent variables?  The second is whether these subsets are minimal, that is, whether they cannot be further decomposed while preserving their independence. The first question can be answered affirmatively, since $\Sigma^{\cal W}_{\varphi} \subseteq \Sigma\varphi$. The second, however, is considerably more challenging, unlikely to be simpler than the underlying realisability problem.

In summary, identifying independent variables in reactive systems with the aid of a model checker is a very interesting approach, as it can uncover sets of variables that are effectively independent at a computational cost lower than that of solving the full realizability problem. 
However, 
extending these ideas to strategy-based independence would require reasoning about winning strategies themselves. Consequently, it is unlikely that such notions can be computed substantially more efficiently than realizability itself.

\bibliographystyle{elsarticle-harv} 
\bibliography{references}

%% else use the following coding to input the bibitems directly in the
%% TeX file.

%% Refer following link for more details about bibliography and citations.
%% https://en.wikibooks.org/wiki/LaTeX/Bibliography_Management

\end{document}